\documentclass[letterpaper, 12 pt, onecolumn, conference]{ieeeconf}
\usepackage{amsfonts}
\usepackage{amssymb,amsmath,epsfig,graphicx,amsmath,epstopdf,color}
\usepackage[T1]{fontenc}

\IEEEoverridecommandlockouts
\overrideIEEEmargins
\newtheorem{theorem}{Theorem}

\newtheorem{remark}{Remark}

\newtheorem{property}{Property}

\begin{document}

\title{Desired Model Compensation based Position Constrained Control of Robotic Manipulators }
\author{ Samet Gul, Erkan Zergeroglu, Enver Tatlicioglu and Mesih Veysi Kilinc \thanks{
\newline S. Gul is with the Department of Computer Engineering, Gebze Technical University, 41400, Gebze, Kocaeli, Turkey (Email: sametgul@gtu.edu.tr),
\newline E. Zergeroglu is with the Department of Computer Engineering, Gebze Technical University, 41400, Gebze, Kocaeli, Turkey (Email: e.zerger@gtu.edu.tr),
\newline E. Tatlicioglu is with the Department of Electrical \& Electronics Engineering, Izmir Institute Pof Technology, Izmir, 35430 Turkey  E-mail: enver@iyte.edu.tr),
\newline M. V. Kilinc is with the Institute of Information Technologies, Gebze Technical University, 41400, Gebze, Kocaeli, Turkey (Email: m.kilinc@gtu.edu.tr).
}}

\maketitle

\begin{abstract}
This work presents the design and the corresponding stability analysis of desired model based, joint position constrained, robot controller. Specifically, provided that the initial joint position tracking error signal starts below some predefined value, the proposed controller ensures that the joint  tracking error signal remains inside the region (defined by predefined upper--bound) and approaches to zero asymptotically. 


\end{abstract}

\section{Introduction}

One of the drawback of controller designs for MIMO systems based on Lyapunov type analysis techniques is the lack of direct knowledge of the transient performance of the system states. As the outcome of the stability analysis via Lyapunov based arguments is usually stated with respect to increasing time. Specifically, when the overall stability result obtained through a Lyapunov based analysis is asymptotic stability of the system states, we can conclude that the states of the system remain bounded (i.e. in $\mathcal{L}_{\infty}$ ) and will eventually converge to the desired states, but this does not give any information on how the states behave during the transient. However, on real world systems, transient behavior is as important as the type of the stability obtained, as this also frames the steady state behavior and if direct manipulation of transient behavior is not possible then at least a reasonable bound should be ensured preferably \textit{a priori}. 

A possible solution to this problem relies on barrier Lyapunov function (BLF) approach based designs. Although, applying constraints was considered in optimization field for quite some time, its application to nonlinear control field is relatively new and dates back to early 2000s \cite{jiang05}, \cite{Tee09}. Some line of the past studies applied BLFs to deal with constraints for systems in the Brunovsky form \cite{jiang05}, in strict feedback form \cite{Tee09}, in strict feedback form with time varying output constraints \cite{Tee11}, and in pure feedback form \cite{Liu16}. An asymmetric BLF was proposed for systems in pure feedback form under time varying full state constraints in \cite{Wang17}. In \cite{Afflitto18}, bounding both the trajectory tracking error and the parameter estimation error vector within user defined constraint sets have been considered.

One line of research have focused on applying BLF type control designs to mechatronic systems. In \cite{kabzinski17}, a systematic motion controller based on BLF was designed for servo systems. \cite{Doulgeri09} and \cite{Doulgeri10icra} used prescribed performance criteria for regulation control of robot manipulators which are extended to tracking control in \cite{Doulgeri12RAS}, \cite{Doulgeri13ROB}, \cite{Doulgeri16RAL}. Hackl and Kennel, in \cite{Hackl12}, designed a position controller with prescribed performance criteria for robot manipulators when dynamic model is partially known. In \cite{Doulgeri10iros}, a task space regulator guaranteeing prescribed performance was proposed while \cite{Doulgeri16} used BLF based approach for joint limit avoidance sub--task of redundant robot manipulator control problem. In \cite{Doulgeri12}, prescribed performance methods were utilized for referential control of human--like movements of redundant arms. \cite{Doulgeri12AUT}, \cite{Doulgeri10} considered force/position control of robot manipulators with prescribed performance while guaranteeing contact.

In this work, tracking control of robot manipulators in joint space is aimed. In addition to the joint position tracking objective, ensuring \textit{a priori} limits for the entries of the joint tracking error is aimed as the secondary control objective. The control problem is further complicated due to the presence of parametric uncertainties in the mathematical model of the robot manipulator. To ensure tracking control objective, a nonlinear proportional derivative feedback component is designed. To deal with the parametric uncertainties, desired compensation based adaptive controller component is proposed as part of the control input. Different from the similar designs in the literature, in a novel approach, the control gain of the tracking error is proposed to be error--dependent and two fundamentally different control gain matrices are designed. To ensure limiting the entries of the tracking error in addition to guaranteeing asymptotic convergence to the origin, two BLFs are introduced. Numerical simulation results are shown to be commensurate with the analysis.

\section{Robot Model and Properties}

In this section, the mathematical model of the robot along with some of the model properties that will be made use of during the analysis will be presented. The mathematical model of an $n$ degree of freedom (dof) revolute joint robot manipulator is presented as \cite{lewis}
\begin{equation}
M\left(q\right) \ddot{q} + C\left(q ,\dot{q}\right) \dot{q} + G\left(q\right) + F_{d}\dot{q} = \tau \label{model}
\end{equation}
in which $q\left(t\right)$, $\dot{q}\left(t\right)$, $\ddot{q}\left(t\right) \in \mathbb{R}^{n}$ are the vectors for joint positions, velocities, and accelerations, respectively, $M\left(q\right) \in \mathbb{R}^{n\times n}$ is the inertia matrix, $C\left(q,\dot{q}\right) \in \mathbb{R}^{n\times n}$ is the centripetal Coriolis matrix, $G\left(q\right) \in \mathbb{R}^{n}$ represents the gravitational effects, $F_d \in \mathbb{R}^{n\times n}$ is a positive definite diagonal matrix denoting the constant viscous frictional effects, and $\tau\left(t\right) \in \mathbb{R}^{n}$ is the control input torque. As commonly utilized in the robotics literature, the dynamic model terms in \eqref{model} satisfy the standard properties detailed below.

\begin{property}\label{P1}
The inertia matrix is positive definite and symmetric, and also satisfies the given inequalities \cite{lewis}
\begin{equation}
m_1 I_{n} \leq M \left(q\right) \leq m_2 I_{n}  \label{prop1}
\end{equation}
where $m_1$, $m_2$ are known, positive, bounding constants with $I_n \in \mathbb{R}^{n\times n}$ being the standard identity matrix.
\end{property}

\begin{property}\label{P2}
The inertia and centripetal Coriolis matrices satisfy the given skew--symmetry relationship \cite{lewis}
\begin{equation}
\xi^T \left(\dot{M}-2C\right) \xi = 0 \text{ } \forall \xi \in \mathbb{R}^{n}. \label{prop2}
\end{equation}
\end{property}

\begin{property}\label{P3}
The centripetal Coriolis matrix satisfies the given switching expression \cite{lewis}
\begin{equation}
C\left(\xi ,\nu \right) \eta = C\left(\xi ,\eta\right) \nu \text{ } \forall \xi, \nu, \eta \in \mathbb{R}^{n}. \label{prop3}
\end{equation}
\end{property}

\begin{property}\label{P4}
Following bounding expressions can be obtained for the dynamic model terms in \eqref{model} \cite{lewis}
\begin{eqnarray}
\Vert M\left( \xi \right) - M\left( \nu \right) \Vert_{i\infty } &\leq &\zeta_{m1} \Vert \xi - \nu \Vert \label{prop4a} \\
\Vert C\left( \xi , \nu \right) \Vert_{i\infty } &\leq &\zeta_{c1} \Vert \nu \Vert \label{prop4b} \\
\Vert C\left( \xi , \nu \right) - C\left( \eta , \nu \right) \Vert_{i\infty } &\leq &\zeta_{c2} \Vert \xi -\eta \Vert \label{prop4c} \\
\Vert G\left( \xi \right) - G\left( \nu \right) \Vert &\leq &\zeta_{g} \Vert \xi - \nu \Vert \label{prop4d}
\end{eqnarray}
$\forall$ $\xi $, $\nu$, $\eta \in \mathbb{R}^{n}$ where $\zeta_{m1}$, $\zeta_{c1}$, $\zeta_{c2}$, $\zeta_{g}\in \mathbb{R}$ are positive
bounding constants and subscript $i\infty$ denoting induced infinity norm of a matrix.
\end{property}

\begin{property}\label{P5}
The mathematical model of the robot dynamics given in \eqref{model} can be reconfigured as
\begin{equation}
Y\left(q,\dot{q},\ddot{q}\right)\theta = M\left( q\right) \ddot{q} + C\left( q,\dot{q}\right)\dot{q} + G\left( q\right) + F_{d}\dot{q} \label{prop5}
\end{equation}
in which $Y\left(q,\dot{q},\ddot{q}\right)\in \mathbb{R}^{n\times p}$ is the regression matrix that is a function of the joint position, velocity and acceleration vectors, and $\theta \in \mathbb{R}^{p}$ contains constant robot model parameters. The regression matrix formulation of \eqref{prop5} is also written in terms of desired trajectory and its time derivatives in the following manner
\begin{equation}
Y_d\left(q_{d},\dot{q}_{d},\ddot{q}_{d}\right)\theta = M\left( q_{d}\right) \ddot{q}_{d} + C\left( q_{d},\dot{q}_{d}\right) \dot{q}_{d} + G\left( q_{d}\right) + F_{d}\dot{q}_{d} \label{prop5d}
\end{equation}
where the desired version of the regression matrix $Y_d\left(q_{d},\dot{q}_{d},\ddot{q}_{d}\right)\in \mathbb{R}^{n\times p}$ is a function of the desired joint position, velocity and acceleration vectors denoted respectively by $q_{d}\left( t\right)$, $\dot{q}_{d}\left( t\right)$, $\ddot{q}_{d}\left( t\right) \in \mathbb{R}^{n}$.
\end{property}

\section{Control Problem and Error System Development}

In this section, the control problem and the accompanying error system development will be presented. The primary control objective is to design the control input torque $\tau \left( t\right) $ such that the joint position vector $q\left( t\right)$ approaches to the desired joint position vector $q_{d}\left( t\right) $ as time increases (\textit{i.e.}, the tracking control objective). In addition to the joint tracking control objective, a secondary control objective is to ensure that the entries of the position tracking error, shown with $e\left( t\right) \in \mathbb{R}^{n}$, remains inside a predefined bound, denoted with $\Delta_i>0$ for each joint $i$, in the sense that\footnote{In this paper, subscript $i$ of a diagonal matrix or a column vector denote the $i$th diagonal entry of the matrix or the $i$th entry of the vector, respectively.}
\begin{equation}
\vert e_{i}\left(t\right) \vert < \Delta_i \text{ } \forall t>0 . \label{obj}
\end{equation}
Providing the stability of the closed loop system by keeping all system trajectories bounded is also essential. In the subsequent development, joint position and joint velocity measurements are considered to be available. The control problem is complicated due to parametric uncertainties in the robot dynamic model (\textit{i.e.}, $\theta$ vector in \eqref{prop5} or \eqref{prop5d} is uncertain). The desired joint position trajectory is considered to be chosen as sufficiently smooth in the sense that itself along with its first two derivatives are bounded functions of time.

In order to quantify the main control objective, the joint position tracking error $e\left( t\right) \in \mathbb{R}^{n}$ is defined as
\begin{equation}
e \triangleq q_{d}-q \label{e}
\end{equation}
and to present the subsequent design and the associated synthesis and analysis with only first time derivatives, a filtered error, shown with $r\left( t\right) \in \mathbb{R}^{n}$, is introduced
\begin{equation}
r \triangleq \dot{e} + \alpha e \label{r}
\end{equation}Türkiye bu krize, zayıf bir Merkez Bankası, zayıf bir bankacılık sistemi, zayıf bir bütçeyle girdi.
where $\alpha \in\mathbb{R}^{n\times n}$ is a constant, positive definite, diagonal control gain matrix. To obtain the open loop error system dynamics, the time derivative of $r\left( t\right)$ is premultiplied with the inertia matrix to re
ach
\begin{equation}
M\left( q\right) \dot{r} = M\left( q\right) \left(\ddot{q}_{d}+\alpha \dot{e}\right) + C\left(q,\dot{q}\right) \left(\dot{q}_{d}+\alpha e\right) - C\left( q,\dot{q}\right) r + G\left( q\right) + F_{d} \dot{q} - \tau \label{Mrdot0}
\end{equation}
where \eqref{model} was substituted into and \eqref{r} was made use of along with the time derivative of \eqref{e}. Adding and subtracting the desired robot dynamics in \eqref{prop5d} to the right hand side of \eqref{Mrdot0} deduces to
\begin{equation}
M\left( q\right) \dot{r} = - C\left( q,\dot{q}\right) r + \chi + Y_{d} \theta -\tau \label{Mrdot1}
\end{equation}
where $\chi\left(q,\dot{q}, q_d, \dot{q}_d, \ddot{q}_d \right) \in\mathbb{R}^n$ is an uncertain vector defined as
\begin{equation}
\chi \triangleq M\left( q\right) \left(\ddot{q}_{d}+\alpha \dot{e}\right) + C\left(q,\dot{q}\right) \left(\dot{q}_{d}+\alpha e\right) + G\left( q\right) + F_{d}\dot{q} - Y_{d}\theta . \label{chi}
\end{equation}
Via making use of the Properties \ref{P1}, \ref{P3} and \ref{P4}, $\chi$ can be proven to be upper bounded as
\begin{equation}
\Vert \chi \Vert \leq \rho_{1} \Vert e \Vert + \rho_{2} \Vert r \Vert \label{chiBound}
\end{equation}
with $\rho_1\left( \Vert e \Vert \right)$ and $\rho_2\left( \Vert e \Vert \right)$ being known, non--negative, non--decreasing functions. 

\section{Control Design}

In this section, the design of the control law including the adaptation mechanism that will compensate for the uncertain model parameters will be presented. Based on the subsequent stability analysis, the control input torque $\tau\left( t\right)$ is designed as
\begin{equation}
\tau = Y_{d}\hat{\theta} + K_{r} r + K_{e} e + v_{R} \label{tau}
\end{equation}
where $K_{r}\in \mathbb{R}^{n\times n}$ is a constant, positive definite, diagonal control gain matrix, $K_{e}\left( e \right)\in \mathbb{R}^{n\times n}$ is yet to be designed tracking error dependent, positive definite, diagonal control gain matrix, and $v_{R} \left(e,r\right)\in\mathbb{R}^n$ is introduced to compensate for the negative effects of $\chi$ and is designed as
\begin{equation}
v_{R} = \left( k_{n} \rho_{1}^{2} + \rho_{2}\right) r \label{vr}
\end{equation}
with $k_n\in\mathbb{R}$ being a constant, positive damping gain, and $\hat{\theta} \left(t\right)\in\mathbb{R}^p$ is the parameter estimation vector that is adaptively updated according to
\begin{equation}
\dot{\hat{\theta}} = \Gamma Y_{d}^{T} r \label{update}
\end{equation}
in which $\Gamma \in \mathbb{R}^{p\times p}$ is a constant, positive definite, diagonal adaptation gain matrix.

Substituting the designed control input torque in \eqref{tau} and \eqref{vr} into the open loop error system in \eqref{Mrdot1} deduces the below closed loop error system
\begin{equation}
M\left( q\right) \dot{r} = - C\left( q,\dot{q}\right) r  - K_{r} r - K_{e} e + \chi - \left( k_{n} \rho_{1}^{2} + \rho_{2}\right) r + Y_{d} \tilde{\theta} \label{Mrdot2}
\end{equation}
with $\tilde{\theta}\left(t\right)\in\mathbb{R}^p$ denoting the parameter estimation error defined as
\begin{equation}
\tilde{\theta} \triangleq \theta - \hat{\theta} . \label{tildetheta}
\end{equation}

Introduction of the tracking error dependent control gain matrix $K_{e}$ is the main difference of this work from similar past research in the literature in the sense that the design of $K_e$ will enable us to continue with a novel stability analysis to ensure \textit{a priori} boundedness of the entries of the tracking error with ``user imposed'' upper bounds. For this aim, two different $K_e$ designs are proposed. The first one is designed as\footnote{The notation $\text{diag} \lbrace \cdot \rbrace$ denotes a diagonal matrix with its diagonal entries being the ones in the braces.}
\begin{equation}
K_{e} = \text{diag} \left\{ \frac{k_{i}}{\Delta_{i}^{2} - e_{i}^{2}} \right\} \label{Keln}
\end{equation}
with $k_{i}$ $i\in \lbrace 1, \cdots ,n \rbrace$ being constant gains, while the second one is designed as
\begin{equation}
K_{e} = \text{diag} \left\{ 1+\tan^{2}\left( \frac{\pi}{2} \frac{e_{i}^{2}}{\Delta_{i}^{2}}\right) \right\} \label{Ketr}
\end{equation}
with $\Delta_{i}$ being previously introduced in \eqref{obj}.

\section{Stability analysis}

In this section, the stability analysis will be presented. Despite the two tracking error dependent control gain matrix designs being fundamentally different, only the initial parts of the stability analysis differ. For both designs, the stability analysis is framed by the following theorem.

\begin{theorem}
For the robot manipulator mathematical model in \eqref{model}, the controller in \eqref{tau} and \eqref{vr} along with the adaptive update law in \eqref{update} and the tracking error dependent control gain matrix design in either \eqref{Keln} or \eqref{Ketr} ensures global asymptotic convergence of the tracking error and the filtered error to the origin and guarantees that the entries of the joint tracking error remain within a predefined bound while also proving closed loop stability by ensuring boundedness of all the system trajectories provided that the damping gain $k_n$ introduced in \eqref{vr} is chosen sufficiently high. 
\end{theorem}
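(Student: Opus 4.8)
The plan is to build a barrier Lyapunov function that simultaneously encodes the tracking objective, the parameter estimation error, and the \emph{a priori} bound $\vert e_i \vert < \Delta_i$. For the logarithmic design \eqref{Keln}, I would take
\begin{equation}
V = \frac{1}{2} r^T M(q) r + \frac{1}{2} \tilde{\theta}^T \Gamma^{-1} \tilde{\theta} + \frac{1}{2} \sum_{i=1}^{n} k_i \ln \frac{\Delta_i^2}{\Delta_i^2 - e_i^2} , \label{Vln}
\end{equation}
so that the gradient of the barrier term with respect to $e_i$ is exactly $k_i e_i /(\Delta_i^2 - e_i^2)$, i.e. the $i$th entry of $K_e e$; this is precisely why $K_e$ was made error-dependent with that particular form. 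For the tangent design \eqref{Ketr}, the same role is played by the barrier term $\frac12 \sum_i \frac{\Delta_i^2}{\pi} \tan\!\big(\frac{\pi}{2}\frac{e_i^2}{\Delta_i^2}\big)$, whose $e_i$-derivative is $e_i\big(1+\tan^2(\frac{\pi}{2}\frac{e_i^2}{\Delta_i^2})\big)$, again matching $K_e e$. In both cases $V$ is positive definite and finite precisely on the open set $\{\vert e_i\vert<\Delta_i\ \forall i\}$, and $V\to\infty$ as any $\vert e_i\vert\to\Delta_i$; since the initial error is assumed to start strictly inside, $V(0)$ is finite.

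Next I would differentiate $V$ along the closed-loop dynamics. Using Property~\ref{P2} the term $r^T(\dot M - 2C)r$ drops out, the adaptation law \eqref{update} cancels the $r^T Y_d\tilde\theta$ cross term against the $\tilde\theta$-quadratic term, and the barrier-gradient term contributes $\sum_i \frac{k_i e_i \dot e_i}{\Delta_i^2 - e_i^2}$, which via \eqref{r} equals $(K_e e)^T(r-\alpha e)$. The $+(K_e e)^T r$ piece cancels the $-r^T K_e e$ coming from the closed-loop equation \eqref{Mrdot2}, leaving
\begin{equation}
\dot V = -r^T K_r r - e^T \alpha K_e e + r^T\chi - \big(k_n\rho_1^2+\rho_2\big)\Vert r\Vert^2 . \label{Vdotinterm}
\end{equation}
Now I would invoke the bound \eqref{chiBound}, $r^T\chi \le \rho_1\Vert e\Vert\,\Vert r\Vert + \rho_2\Vert r\Vert^2$, so the $\rho_2\Vert r\Vert^2$ terms cancel and the cross term is handled by completing the square: $\rho_1\Vert e\Vert\,\Vert r\Vert - k_n\rho_1^2\Vert r\Vert^2 \le \frac{1}{4k_n}\Vert e\Vert^2$. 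This yields
\begin{equation}
\dot V \le -\lambda_{\min}(K_r)\Vert r\Vert^2 - e^T\alpha K_e e + \frac{1}{4k_n}\Vert e\Vert^2 . \label{Vdotfinal}
\end{equation}

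The final step is to show the right-hand side of \eqref{Vdotfinal} is negative semidefinite for $k_n$ large enough. Here is where the error-dependent gain earns its keep: for the logarithmic design, $e^T\alpha K_e e \ge \min_i(\alpha_i k_i)\,\Vert e\Vert^2/\Delta_{\max}^2$ — wait, more carefully, each summand is $\alpha_i k_i e_i^2/(\Delta_i^2-e_i^2) \ge (\alpha_i k_i/\Delta_i^2) e_i^2$, so $e^T\alpha K_e e \ge c\Vert e\Vert^2$ with $c=\min_i \alpha_i k_i/\Delta_i^2$; an analogous lower bound holds for the tangent design. Choosing $k_n > 1/(4c)$ makes $\dot V \le -\lambda_{\min}(K_r)\Vert r\Vert^2 - (c-\tfrac{1}{4k_n})\Vert e\Vert^2 \le 0$. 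From $\dot V\le0$ we get $V\in\mathcal L_\infty$, hence $e,r,\tilde\theta\in\mathcal L_\infty$ and, crucially, $e$ never reaches the barrier, so $\vert e_i(t)\vert<\Delta_i$ for all $t>0$, establishing \eqref{obj}. Standard signal-chasing ($q,\dot q$ bounded from $e,r$ and boundedness of $q_d,\dot q_d$; then $\dot r\in\mathcal L_\infty$ from \eqref{Mrdot2}, so $e,r$ are uniformly continuous; $e,r\in\mathcal L_2$ from integrating \eqref{Vdotfinal}) lets Barbalat's lemma give $e,r\to0$. The main obstacle — and the only genuinely delicate point — is verifying that the barrier-gradient cross term produced by differentiating the $\ln$ (or $\tan$) term telescopes \emph{exactly} against $-r^T K_e e$ in \eqref{Mrdot2} and leaves behind only the sign-definite $-e^T\alpha K_e e$; getting the constants in $K_e$, $V$, and the barrier to line up is what makes the two different $K_e$ choices both work, and it is worth checking the tangent case's square-completion constant separately since its lower bound on $e^T\alpha K_e e$ is only $\min_i(\alpha_i)\Vert e\Vert^2$ (using $1+\tan^2(\cdot)\ge1$), which still suffices.
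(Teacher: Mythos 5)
Your proposal follows essentially the same route as the paper: the identical barrier Lyapunov functions, the same cancellation of $e^TK_e r$ against $-r^TK_ee$ via $\dot e=r-\alpha e$, the same nonlinear damping bound $\rho_1\Vert e\Vert\Vert r\Vert-k_n\rho_1^2\Vert r\Vert^2\le\frac{1}{4k_n}\Vert e\Vert^2$, and the same conservative lower bounds $\min_i(\alpha_ik_i/\Delta_i^2)$ and $\min_i\alpha_i$ on $e^T\alpha K_ee$ that the paper records in its remark, followed by Barbalat's lemma. The only blemish is the stray factor of $\tfrac12$ you placed in front of the tangent barrier term $\sum_i\frac{\Delta_i^2}{\pi}\tan\left(\frac{\pi}{2}\frac{e_i^2}{\Delta_i^2}\right)$ (the paper omits it, and your own stated derivative $e_i\dot e_i\left(1+\tan^2(\cdot)\right)$ is only consistent without it), which is a trivial fix.
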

\begin{proof}
For the tracking error dependent diagonal controller gain matrix design in \eqref{Keln}, the analysis is initiated by defining the barrier Lyapunov function $V_{l}\left( r,e,\tilde{\theta}\right) \in\mathbb{R}$ as
\begin{equation}
V_{l} \triangleq \frac{1}{2}r^{T}M\left( q\right) r + \sum_{i=1}^{n}\frac{k_{i}}{2}\ln\left( \frac{\Delta_{i}^{2}}{\Delta_{i}^{2}-e_{i}^{2}}\right) + \frac{1}{2}\tilde{\theta}^{T}\Gamma^{-1}\tilde{\theta} \label{Vl}
\end{equation}
which is positive definite and radially unbounded provided that the initial values of the entries of the joint tracking error satisfies $\vert e_{i}\left( 0\right) \vert < \Delta_{i}$ for all $i\in \lbrace 1, \cdots , n\rbrace$. 

Türkiye bu krize, zayıf bir Merkez Bankası, zayıf bir bankacılık sistemi, zayıf bir bütçeyle girdi.
Taking the time derivative of \eqref{Vl} yields
\begin{equation}
\dot{V}_{l} = r^{T} M\left( q\right) \dot{r} + \frac{1}{2}r^{T}\dot{M}\left( q\right) r + \sum_{i=1}^{n} k_{i} \frac{e_{i} \dot{e}_{i}}{\Delta_{i}^{2}-e_{i}^{2}} + \tilde{\theta}^{T}\Gamma ^{-1}\dot{\tilde{\theta}} \label{Vldot1}
\end{equation}
in which 
\begin{equation}
\sum_{i=1}^{n} k_{i} \frac{e_{i} \dot{e}_{i}}{\Delta_{i}^{2}-e_{i}^{2}} = e^T K_e \dot{e} \label{Vldot1a}
\end{equation}
in view of the diagonal structure of $K_e$ in \eqref{Keln}. Substituting the closed loop error system in \eqref{Mrdot2} for $r$ dynamics, \eqref{Vldot1a} and \eqref{r} for $e$ dynamics, the time derivative of \eqref{tildetheta} along with \eqref{update} for $\tilde{\theta}$ dynamics into \eqref{Vldot1} deduces
\begin{eqnarray}
\dot{V}_{l} &=& r^{T} \left[- C\left( q,\dot{q}\right) r  - K_{r} r - K_{e} e + \chi - \left( k_{n} \rho_{1}^{2} + \rho_{2}\right) r + Y_{d} \tilde{\theta} \right] \nonumber \\
&& + \frac{1}{2}r^{T}\dot{M}\left( q\right) r + e^T K_e \left(-\alpha e + r \right) - \tilde{\theta}^{T} Y_d^T r . \label{Vldot2}
\end{eqnarray}
At the right hand side of \eqref{Vldot2}, making use of the skew--symmetry relationship in Property \ref{P2}, upper bounding $\chi$ with \eqref{chiBound} and then canceling common terms give
\begin{equation}
\dot{V}_{l} \leq - r^{T} K_{r} r - e^T K_e \alpha e + \left[ \rho_{1} \Vert e \Vert \Vert r \Vert - k_{n} \rho_{1}^{2} \Vert r \Vert^2 \right] \label{Vldot3}
\end{equation}
in which for the square bracketed term \cite{Kokotovic92}
\begin{equation}
\rho_{1} \Vert e \Vert \Vert r \Vert - k_{n} \rho_{1}^{2} \Vert r \Vert^2 \leq \frac{1}{4 k_n} \Vert e \Vert^2 \label{damping1}
\end{equation}
can be used to further obtain an upper bound as
\begin{equation}
\dot{V}_{l} \leq - \lambda_{\min} \lbrace K_{r} \rbrace \Vert r \Vert^2 - \left( \lambda_{\min} \lbrace K_e \alpha \rbrace - \frac{1}{4 k_n} \right) \Vert e \Vert^2 . \label{Vldot4}
\end{equation}
After defining the combined error vector $x \triangleq \left[ \begin{array}{cc} r^{T} & e^{T} \end{array} \right]^{T} \in\mathbb{R}^{2n}$ and positive constant $\beta\in\mathbb{R}$ as
\begin{equation}
\beta \triangleq \min \left\{ \lambda_{\min} \lbrace K_{r} \rbrace , \lambda_{\min} \lbrace K_e \alpha \rbrace - \frac{1}{4 k_n} \right\} \label{beta1}
\end{equation}
following upper bound can be obtained for the right hand side of \eqref{Vldot4}
\begin{equation}
\dot{V}_{l} \leq - \beta \Vert x \Vert^2 \label{Vldot5}
\end{equation}
provided that $k_n$ is chosen sufficiently high. 

From the structures of \eqref{Vl} and \eqref{Vldot5}, $V_{l}$ is proven to be bounded and thus $r\left(t\right)$, $e\left(t\right)$, $\tilde{\theta}\left(t\right) \in \mathcal{L}_{\infty}$. By utilizing the boundedness of the above terms along with the boundedness of the desired trajectory and its time derivatives, $\dot{e}\left(t\right)$, $\dot{r}\left(t\right)\in \mathcal{L}_{\infty}$ can be proven from \eqref{r} and \eqref{Mrdot2}, respectively. It can straightforwardly be shown that the remaining terms can be ensured to be bounded as well. After integrating \eqref{Vldot5} on time from initial time to infinity, $x\left(t\right)$ is proven to be square integrable and thus $r\left( t\right)$, $e\left( t\right)\in \mathcal{L}_{2}$. Since $x\left( t\right)\in \mathcal{L}_{2} \cap \mathcal{L}_{\infty}$ and $x\left( t\right)\in  \mathcal{L}_{\infty}$, from Barbalat's Lemma \cite{khalil} $x\left(t\right) \to 0$ as $t \to \infty$ is proven.

For the tracking error dependent diagonal controller gain matrix design in \eqref{Ketr}, the analysis is initiated by defining a similar barrier Lyapunov function, denoted by $V_{t}\left( r,e,\tilde{\theta}\right) \in\mathbb{R}$, as
\begin{equation}
V_{t} \triangleq \frac{1}{2}r^{T}M\left( q\right) r + \sum_{i=1}^{n}\frac{\Delta_i^2}{\pi} \tan\left( \frac{\pi}{2} \frac{e_{i}^{2}}{\Delta_i^2}\right) + \frac{1}{2}\tilde{\theta}^{T}\Gamma ^{-1} \tilde{\theta} \label{Vt}
\end{equation}
where only the second term is different than the corresponding term in \eqref{Vl}. It is noted that $V_{t}\left( r,e,\tilde{\theta}\right) $ is positive definite and radially unbounded provided that the initial values of the entries of the joint tracking error satisfy $\vert e_{i} \left( 0\right) \vert < \Delta_{i}$ for all $i\in \lbrace 1, \cdots ,n\rbrace$.

The time derivative of \eqref{Vt} is obtained as
\begin{equation}
\dot{V}_{t} = r^{T}M\left( q\right) \dot{r} + \frac{1}{2}r^{T}\dot{M}\left( q\right) r + \sum_{i=1}^{n} e_i \dot{e}_i \left( 1 + \tan^2\left( \frac{\pi}{2} \frac{e_{i}^{2}}{\Delta_i^2}\right) \right) + \tilde{\theta}^{T}\Gamma ^{-1} \dot{\tilde{\theta}} \label{Vtdot1}
\end{equation}
where, in view of the diagonal structure of $K_e$ design in \eqref{Ketr}, the third term can be reformulated as
\begin{equation}
\sum_{i=1}^{n} e_i \dot{e}_i \left( 1 + \tan^2\left( \frac{\pi}{2} \frac{e_{i}^{2}}{\Delta_i^2}\right) \right) = e^T K_e \dot{e} . \label{Vtdot1a}
\end{equation}
A closer look at the structure of \eqref{Vtdot1} in view of \eqref{Vtdot1a} reveals the fact that it is same as \eqref{Vldot1} used with \eqref{Vldot1a}, thus the rest of the analysis is same as the previous part. 
\end{proof}

\begin{remark}
Despite obtaining the same result for both choices of tracking error dependent control gain matrices, due to the differences in their designs in \eqref{Keln} and \eqref{Ketr}, the resulting $\beta$ values may be different. For the design in \eqref{Keln}, $\lambda_{\min} \lbrace K_e \alpha \rbrace$ is equal to $\min \lbrace \frac{k_{i}\alpha_{i}}{\Delta_{i}^{2} - e_{i}^{2}} \rbrace$ over $i\in \lbrace 1, \cdots , n \rbrace$, which can conservatively be obtained as $\frac{\min \lbrace k_{i}\alpha_{i}\rbrace}{\max \lbrace \Delta_{i}^{2} \rbrace}$. On the other hand, for the design in \eqref{Ketr}, $\lambda_{\min} \lbrace K_e \alpha \rbrace$ is equal to $\min \lbrace \alpha_{i} \left( 1+\tan^{2}\left( \frac{\pi}{2} \frac{e_{i}^{2}}{\Delta_{i}^{2}}\right) \right) \rbrace$ over $i\in \lbrace 1, \cdots , n \rbrace$ which, after noting that the $\tan^2 (\cdot)$ term being nonnegative, can be conservatively obtained as $\min \lbrace \alpha_{i} \rbrace$. 
\end{remark}

\section{Conclusions}

In this work, we have presented the design and the corresponding analysis of two different types of full state feedback, desired model based, joint position constrained, robot controllers using barrier Lyapunov functions. The proposed controllers ensure that the position tracking error of the robot joints remain inside a predefined value and eventually converge to zero when the initial tracking error starts inside this predefined region, despite the presence of uncertainties in the parameters of robot dynamics. Future work will concentrate on output feedback version of the proposed method and  extending this result to task space control of robotic manipulators.  

\newpage

\end{document}